\newcommand{\rs}[1]{\rho^{#1}(\boldsymbol{x},t)}
\newcommand{\rsoo}[1]{\rho^{#1}(\boldsymbol{x},0)}
\newcommand{\rsnt}[1]{\rho^{#1}(\boldsymbol{x})}
\newcommand{\rsntt}[1]{\rho^{#1}(\boldsymbol{x}(t))}
\newcommand{\rss}[2]{\rho^{#1}(\boldsymbol{x},{#2})}
\newcommand{\rso}[1]{\rho^{#1}_{opt}}
\newcommand{\bd}[1]{\boldsymbol{#1}}
\newtheorem{definition}{Definition} 
\newtheorem{theorem}{Theorem} 
\newtheorem{assumption}{Assumption} 
\newtheorem{problem}{Problem} 
\newtheorem{remark}{Remark}
\newtheorem{lemma}{Lemma}
\newtheorem{example}{Example}
\title{\LARGE \bf
Event-triggered Feedback Control for Signal Temporal Logic Tasks
}
\author{Lars Lindemann$^{a,*}$, Dipankar Maity$^{b,*}$, John S. Baras$^{a,b}$, and Dimos V. Dimarogonas$^a$
\thanks{$^*$ The authors contributed equally.}
\thanks{The work of L. Lindemann and D. V. Dimarogonas was supported in part by the Swedish Research Council, the European Research Council, the Swedish Foundation for Strategic Research,  the EU H2020 Co4Robots project, the SRA ICT TNG project STaRT, and the Knut and Alice Wallenberg Foundation. The work of D. Maity and J. S. Baras was supported in part by DARPA through ARO grant W911NF1410384, ONR grant N00014-17-1-2622, the Knut and Alice Wallenberg Foundation, the Swedish Foundation for Strategic Research, and the Swedish Research Council.}
\thanks{$^a$The authors are with the Department of Automatic Control, School of Electrical Engineering and Computer Science, Royal Institute of Technology (KTH), 100 44 Stockholm, Sweden. {\tt\small llindem@kth.se (L. Lindemann), dimos@kth.se (D.V. Dimarogonas)}}%
\thanks{$^b$ The authors are with the Department of Electrical and Computer Engineering, and the Institute for Systems Research at University of Maryland, USA. {\tt\small dmaity@umd.edu (D. Maity), baras@umd.edu (J.S. Baras)}}
}
\begin{document}

\maketitle
\thispagestyle{empty}
\pagestyle{empty}

\begin{abstract}

A framework for the event-triggered control synthesis under signal temporal logic (STL) tasks is proposed. In our previous work, a continuous-time feedback control law was designed, using the prescribed performance control technique, to satisfy STL tasks. We replace this continuous-time feedback control law by an event-triggered controller. The event-triggering mechanism is based on a maximum triggering interval and on a norm bound on the difference between the value of the current state and the value of the state at the last triggering instance. Simulations of a multi-agent system  quantitatively show the efficacy of using an event-triggered controller to reduce communication and computation efforts. 
\end{abstract}

\section{Introduction}
\label{sec:introduction}

Robot motion planning has traditionally been a challenging problem to the control community. Initially, the studies were primarily directed towards optimal navigation from an initial to a goal position while avoiding obstacles \cite{karaman2011sampling}. The focus, however, shifted over the years to integrate complex high-level task descriptions with the low-level dynamics of the robots. Consequently, temporal logics \cite{baier} were brought into the domain of motion planning to formally express and systematically address such  complex behaviors in a generic way \cite{belta2007symbolic,fainekos2009temporal}. Temporal logics have a rich expressivity; however, integrating the temporal logic descriptions with the dynamics of the robot is far from trivial.  The derived methods are highly relying on automata theory and an appropriate abstraction of the dynamics. As a result, complexity problems arise as the size of the automata grow exponentially with the `size of the task' and with the complexity of the dynamics. Efficient techniques have been proposed in the context of temporal logic-based design. Nonetheless, integration of temporal logic tasks with a `finer' abstraction of the dynamics is still computationally expensive. Moreover, temporal logic formulae expressing real-time constraints such as in signal temporal logic (STL) \cite{maler1} are even more intricate to handle.

In this work, we take a different approach of integrating high level STL tasks with the dynamics of the robot. Instead of following the automata-based approach, we aim for a feedback control law that maximizes a robustness metric associated with the temporal logic formula. STL was introduced in \cite{maler1}, while space robustness \cite{donze2} is the aforementioned robustness metric for STL. In our previous work \cite{lindemann2017prescribed}, we leveraged ideas from prescribed performance control \cite{bechlioulis2014low} to derive a feedback control law that satisfies the STL task under consideration. Prescribed performance control essentially allows to impose a transient behavior to the robustness metric, that, if properly designed, results in a satisfaction of the STL task. To implement such a feedback control law, we need continuous transmission of the measurements from the sensors to the controllers, and this can be a bottleneck in implementations. Mobile robots, e.g., operating in uncertain environments, have limited energy and bandwidth to transmit continuous measurements to the controllers. To alleviate this problem, we delve into synthesizing an event-triggered feedback control law in this paper that will ensure the satisfaction of the STL task. Event-triggered control is an approach that has gained increasing attention recently \cite{tabuada2007event}. Extensions to multi-agent systems have appeared, e.g., in \cite{dimarogonas2012distributed}. An overview of this topic in the setup of hybrid systems can be found in  \cite{postoyan2015framework}. A first attempt in combining event-triggered control and temporal logic-based specifications is presented in \cite{maity2018event}. However, the synthesis of the control for the satisfaction of the temporal-logic formula is based on the automata theory and discretization methods, and hence it needs to deal with the high computational complexity issues associated with such methods. 

The main contribution of this paper is an event-triggering feedback control law for dynamical systems under STL task specifications. This event-triggered feedback control law is robust with respect to noise and the task satisfaction, while, at the same time, avoiding discretizations. We emphasize that the major difference compared with our previous work in \cite{lindemann2017prescribed} is the event-based nature of the approach taken here due to which a drastic reduction in communication is observed.

The rest of the paper is organized as follows: notations and preliminaries are provided in Section \ref{sec:background}, the formal problem definition is given in Section \ref{sec:problem_formulation}, while Section \ref{sec:strategy} studies the control synthesis problem. Simulations are performed in Section \ref{sec:simulations}, and finally we conclude the paper in Section \ref{sec:conclusion}.
\section{Notation and Preliminaries}
\label{sec:background}

Scalars are denoted by lowercase, non-bold letters $x$ and column vectors are lowercase, bold letters $\boldsymbol{x}$. True and false are denoted by $\top$ and $\bot$; $\mathbb{R}^n$ is the $n$-dimensional vector space over the real numbers $\mathbb{R}$. The natural, non-negative, and positive real numbers are $\mathbb{N}$, $\mathbb{R}_{\ge0}$, and $\mathbb{R}_{>0}$, respectively. Let $\|\cdot\|$ and $\|\cdot\|_\infty$ denote the $L_2$-norm and $L_\infty$-norm, respectively, and set $B_{\delta}(\bd x'):=\{\boldsymbol{x}\in\mathbb{R}^n|\|\boldsymbol{x}-\boldsymbol{x}'\|_\infty {\le} \delta\}$.

Let $\boldsymbol{x}\in\mathbb{R}^n$, $\boldsymbol{u}\in\mathbb{R}^m$, and $\boldsymbol{w}\in \mathcal{W}\subset\mathbb{R}^n$, where $\mathcal{W}$ is bounded,  be the state, input, and additive noise of a  system 
\begin{align}\label{system_noise}
\dot{\boldsymbol{x}}(t)&=f(\boldsymbol{x}(t))+g(\boldsymbol{x}(t))\boldsymbol{u}+\boldsymbol{w}(t)
\end{align}
where $f$ is unknown apart from a regularity assumption. 

\begin{assumption}\label{assumption:1}
	The functions $f:\mathbb{R}^n\to\mathbb{R}^n$ and $g:\mathbb{R}^n\to\mathbb{R}^{n\times m}$ are  locally Lipschitz continuous, the function $\boldsymbol{w}:\mathbb{R}_{\ge 0}\to\mathbb{R}^n$ is piecewise continuous, and $g(\boldsymbol{x}){g(\boldsymbol{x})}^T$ is positive definite for all $\boldsymbol{x}\in \mathbb{R}^n$, i.e., $\exists \lambda_{\min}\in\mathbb{R}_{>0}$ s.t. $\lambda_{\min}\|\boldsymbol{z}\|^2\le \boldsymbol{z}^Tg(\boldsymbol{x})g^T(\boldsymbol{x}) \boldsymbol{z}$ for all $\boldsymbol{x},\boldsymbol{z}\in \mathbb{R}^n$.
\end{assumption}

We emphasize again that $f$ is unknown so that \eqref{system_noise} is \emph{not} feedback equivalent to $\dot{\boldsymbol{x}}(t)=\boldsymbol{u}(t)+\boldsymbol{w}(t)$.

\subsection{Signal Temporal Logic (STL)}
Signal temporal logic (STL) consists of predicates $\mu$ that are, {for some $\boldsymbol{\zeta}\in\mathbb{R}^n$}, evaluated by a {continuously differentiable} predicate function $h:\mathbb{R}^n\to\mathbb{R}$  as
\begin{align*}
\mu:=
 \begin{cases} 
 \top \text{ if } h(\boldsymbol{\zeta})\ge 0\\
 \bot \text{ if } h(\boldsymbol{\zeta})< 0.
 \end{cases}
 \end{align*}
 
 The STL syntax is then given by
\begin{align*}
\phi \; ::= \; \top \; | \; \mu \; | \; \neg \phi \; | \; \phi_1 \wedge \phi_2 \; | \; F_{[a,b]}\phi\;|\; G_{[a,b]}\phi
\end{align*}
where $\phi_1$, $\phi_2$ are STL formulas and where $a\in\mathbb{R}_{\ge 0}$ and $b\in \mathbb{R}_{\ge 0}\cup \infty$ with $a\le b$.  The satisfaction relation $(\boldsymbol{x},t)\models \phi$ denotes that the signal $\boldsymbol{x}:\mathbb{R}_{\ge 0}\to\mathbb{R}^n$ satisfies  $\phi$ at time $t$. The STL semantics \cite[Definition 1]{maler1} are recursively defined as: $(\boldsymbol{x},t) \models \mu \Leftrightarrow h(\boldsymbol{x}(t))\ge 0$, $(\boldsymbol{x},t) \models \neg\phi \Leftrightarrow \neg((\boldsymbol{x},t) \models \phi)$, $(\boldsymbol{x},t) \models \phi_1 \wedge \phi_2 \Leftrightarrow (\boldsymbol{x},t) \models \phi_1 \wedge (\boldsymbol{x},t) \models \phi_2$, $(\boldsymbol{x},t) \models F_{[a,b]}\phi \Leftrightarrow \exists t_1 \in[t+a,t+b] \text{ s.t. }(\boldsymbol{x},t_1)\models \phi$, and $(\boldsymbol{x},t) \models G_{[a,b]}\phi \Leftrightarrow \forall t_1 \in[t+a,t+b] \text{, }(\boldsymbol{x},t_1)\models \phi$. Disjunction and until operators are not considered in this paper. Space robustness \cite{donze2} are robust semantics for STL, which are given in Definition \ref{def:2} and denoted by $\rs{\phi}$.  Space robustness determines how robustly a signal $\boldsymbol{x}$ satisfies the formula $\phi$ at time $t$ and it holds that $(\boldsymbol{x},t)\models \phi$ if $\rs{\phi}>0$ \cite{fainekos2009robustness}.
\begin{definition}\cite[Definition 3]{donze2} {The semantics of space robustness are recursively given by:}
\begin{align*}
\rs{\mu}& := h(\boldsymbol{x}(t))\\
\rs{\neg\phi} &:= 	-\rs{\phi}\\
\rs{\phi_1 \wedge \phi_2} &:= 	\min\big(\rs{\phi_1},\rs{\phi_2}\big)\\
\rs{F_{[a,b]} \phi} &:= \underset{t_1\in[t+a,t+b]}{\max}\rss{\phi}{t_1}\\
\rs{G_{[a,b]} \phi} &:= \underset{t_1\in[t+a,t+b]}{\min}\rss{\phi}{t_1}.
\end{align*}
\label{def:2}
\end{definition}

By a slight change of notation, let $\rho^\phi(\boldsymbol{x}(t)):=\rs{\phi}$ if $t$ is not explicitly contained in $\rs{\phi}$, i.e., $t$ is contained in $\rho^\phi$ only through the composition of $\rho^\phi$ with the signal $\boldsymbol{x}$. For instance, $\rho^\mu(\boldsymbol{x}(t)) := h(\boldsymbol{x}(t))$ since $h(\boldsymbol{x}(t))$ is the composition of $h$ with $\boldsymbol{x}$. However, $t$ is explicitly contained in $\rs{\phi}$ for temporal operators (eventually or always).

%
%

\subsection{Event-Triggered Control} \label{sec:event}

In contrast to continuous feedback control, event-triggered feedback control requires the state measurements intermittently while ensuring stability and sometimes even performance arbitrarily close to the continuous feedback controller \cite{maity2018event}. These controllers rely on an event generator that decides on the instances when the state measurements are sent to the controller. Thereby, the communication between the sensors and the controller can be reduced. Thus, in event-triggered control, the controller obtains new state information only at certain discrete time instances denoted by $t_1,t_2,\cdots,t_i,\cdots$. 
%

For any given event-triggered controller, it needs to be guaranteed that Zeno behavior is avoided, i.e., the case of infinite switching in finite time. This will be explicitly shown in our design by guaranteeing a strictly positive minimum inter-triggering time, which means that $t_{i+1}-t_i$ for each $i\ge 1$ is positive and lower bounded.

  
\subsection{Prescribed Performance Control (PPC)}
Prescribed performance control (PPC) \cite{bechlioulis2014low} constrains a generic error $\boldsymbol{e}:\mathbb{R}_{\ge 0}\to\mathbb{R}^n$ to a user-designed funnel. For instance, consider $\boldsymbol{e}(t):=\begin{bmatrix}e_1(t) & \hdots e_n(t) \end{bmatrix}^T:=\boldsymbol{x}(t)-\boldsymbol{x}_d(t)$, where $\boldsymbol{x}_d$ is a desired trajectory. In order to prescribe transient and steady-state behavior to this error, let us define the performance function $\gamma$ in Definition \ref{def:p} as well as the transformation function $S$ in Definition \ref{def:SS}.
\begin{definition}\label{def:p}\cite{bechlioulis2014low} 
	A performance function $\gamma:\mathbb{R}_{\ge 0}\to\mathbb{R}_{> 0}$ is a continuously differentiable, bounded, positive, and non-increasing function given by
	$\gamma(t):=(\gamma_0-\gamma_\infty)\exp(-lt)+\gamma_\infty$ where $\gamma_0, \gamma_\infty \in \mathbb{R}_{>0}$ with $\gamma_0\ge \gamma_\infty$ and $ l \in \mathbb{R}_{\ge 0}$.
\end{definition}
\begin{definition}\label{def:SS}\cite{bechlioulis2014low} 
A transformation function $S:(-1,M)\to\mathbb{R}$ with $M\in[0,1]$ is a smooth  and strictly increasing function. Let $S(\xi):=\ln\left(-\frac{\xi+1}{\xi-M}\right)$.
\end{definition}

Now let $\gamma_i$ be a performance function in the sense of Definition \ref{def:p}. The task is to synthesize a continuous feedback control law such that each error $e_i$ satisfies
\begin{align}
-\gamma_i(t)<e_i(t)<M\gamma_i(t) \;\;\;   \forall t\in \mathbb{R}_{\ge 0}, \forall i\in \{1,\hdots,n\}\label{eq:constrained_funnel}
\end{align}
given that $-\gamma_i(0)<e_i(0)<M\gamma_i(0)$; $\gamma_i$ is a design parameter by which transient and steady-state behavior of $e_i$ can be prescribed. Note also that \eqref{eq:constrained_funnel} is a constrained control problem with $n$ constraints subject to the dynamics in \eqref{system_noise}. Next, define the normalized error $\xi_i:=\frac{e_i}{\gamma_i}$. Dividing \eqref{eq:constrained_funnel} by $\gamma_i$ and applying the transformation function $S$ results in an unconstrained control problem  $-\infty<S\big(\xi_i(t)\big)<\infty$ with the transformed error $\epsilon_i:=S\big(\xi_i\big)$. If $\epsilon_i(t)$ is bounded for all $t\ge \mathbb{R}_{\ge 0}$, then $e_i$ satisfies \eqref{eq:constrained_funnel}.

\section{Problem Definition}
\label{sec:problem_formulation}
In this paper, we consider the following STL fragment
\begin{subequations}\label{eq:subclass}
\begin{align}
\psi \; &::= \; \top \; | \; \mu \; | \; \neg \mu \; | \; \psi_1 \wedge \psi_2\label{eq:psi_class}\\
\phi \; &::= \;  G_{[a,b]}\psi \; | \; F_{[a,b]} \psi \label{eq:phi_class}\\
\theta^{s_1} &::= \bigwedge \limits_{k=1}^{K} \phi_k \text{ with } b_k\le a_{k+1} \label{eq:theta1_class} \\
\theta^{s_2} &::=  \tilde{\phi}_1\label{eq:theta2_class}\\
\theta &::= \theta^{s_1} \; |\; \theta^{s_2}.\label{eq:theta_class}
\end{align}
\end{subequations}
where $\psi$ in \eqref{eq:phi_class} and $\psi_1, \psi_2$ in \eqref{eq:psi_class}  are formulas of class $\psi$ given in \eqref{eq:psi_class}. Formulas $\phi_k$ with $k\in\{1,\hdots,K\}$ in \eqref{eq:theta1_class} are formulas of class $\phi$ given in \eqref{eq:phi_class} with time intervals $[a_k,b_k]$, whereas $\tilde{\phi}_1$ in \eqref{eq:theta2_class} follows from the recursive definition $\tilde{\phi}_k:=F_{[c_k,d_k]}(\psi_k\wedge\tilde{\phi}_{k+1}) \text{ for all } k\in\{1,\hdots K-1\}$ and $\tilde{\phi}_K:=F_{[c_K,d_K]}\psi_K$ with $c_k,d_k\in\mathbb{R}_{\ge 0}$ and $c_k\le d_k$ for all $k\in\{1,\hdots K\}$. We refer to $\psi$ as \emph{non-temporal formulas} and use $\rsntt{\psi} :=\rs{\psi}$ due to the previous discussion. In contrast, $\phi$ and $\theta$ are referred to as \emph{temporal formulas}. 

{The control strategy that will be introduced in Section \ref{sec:strategy} requires three additional assumptions explained next. First, for conjunctions of \emph{non-temporal formulas} of class $\psi$ given in \eqref{eq:psi_class}, e.g., $\psi:=\psi_1\wedge\psi_2$, we approximate the robust semantics in Definition~\ref{def:2}, e.g., $\rho^{\psi_1\wedge\psi_2}$,  by a smooth function.
\begin{assumption}\label{assumption:2}
The robust semantics for a conjunction of \emph{non-temporal formulas} of class $\psi$ given in \eqref{eq:psi_class}, i.e., $\rsntt{\psi_1 \wedge \psi_2}$, are approximated by a smooth function as 
\begin{align*}
\rsntt{\psi_1 \wedge \psi_2}= -\frac{1}{\eta}\ln\Big(\sum_{i=1}^2\exp\big(-\eta\rsntt{\psi_i}\big)\Big)
\end{align*}
where $\eta>0$ determines the accuracy of the approximation, i.e., larger values of $\eta$ imply higher accuracy.
\end{assumption}

From now on, when we write $\rho^{\psi}(\boldsymbol{x},t)$, $\rho^{\phi}(\boldsymbol{x},t)$, or $\rho^{\theta}(\boldsymbol{x},t)$ for formulas of class $\psi$, $\phi$, and $\theta$, respectively, we mean the robust semantics including the smooth approximation in Assumption \ref{assumption:2} unless stated otherwise. This approximation is an under-approximation of the robust semantics as remarked in \cite{lindemann2017prescribed}, i.e., the property that $(\boldsymbol{x},t)\models\theta$ if $\rho^{\theta}(\boldsymbol{x},t)>0$ is preserved. The next example illustrates the above and emphasizes that the smooth approximation is only used for conjunctions of \emph{non-temporal formulas} $\psi$.

\begin{example}\label{ex:under_approx}
Assume the formula $\theta:=F_{[5,15]}(\psi_1\wedge\psi_2)\wedge G_{[20,30]}\psi_3$. Then, the robust semantics at time $t:=0$ are $\rho^{\theta}(\boldsymbol{x},0)=\min\big(\max_{t\in[5,15]}(-\frac{1}{\eta}\ln(\exp(-\eta\rsntt{\psi_1})+\exp(-\eta\rsntt{\psi_2})),\min_{t\in[20,30]}\rs{\psi_3}\big)$. 
\end{example}

Second, the next assumption restricts the class of $\psi$ formulas given in \eqref{eq:psi_class} that are contained in \eqref{eq:phi_class} and \eqref{eq:theta_class}.

\begin{assumption}\label{assumption:4}
Each formula of class $\psi$ that is contained in \eqref{eq:phi_class} and \eqref{eq:theta_class} is: 1) such that $\rho^\psi(\boldsymbol{x})$ is concave; and 2) well-posed in the sense that $\rho^\psi(\boldsymbol{x})>0$ implies $\|\boldsymbol{x}\|\le C<\infty$ for some $C\ge 0$.
\end{assumption}

Third, let the global optimum of $\rsnt{\psi}$ be
\begin{align}
\rso{\psi}:=\sup_{\boldsymbol{x}\in\mathbb{R}^n} \rsnt{\psi}
\end{align} 
where $\rsnt{\psi}$ is continuous and concave (Assumption \ref{assumption:2} and \ref{assumption:4}), which simplifies the calculation of $\rso{\psi}$. It holds that $\phi$ is satisfiable, i.e., $\exists\boldsymbol{x}:\mathbb{R}_{\ge 0}\to \mathbb{R}^n$ s.t. $(\boldsymbol{x},0)\models \phi$,  if $\rso{\psi}> 0$.
\begin{assumption}\label{assumption:3}
	The supremum of $\rsnt{\psi}$ is s.t. $\rso{\psi}> 0$. 
\end{assumption} 

 In \cite{lindemann2017prescribed}, we derived a continuous feedback control law to satisfy formulas of class $\phi$ given in \eqref{eq:phi_class}. In this paper, the focus is to derive an event-based feedback control law to satisfy $\phi$. A hybrid control strategy similar to \cite{lindemann2017prescribed} can then be used to satisfy formulas of class $\theta$ given in \eqref{eq:theta_class}. We now summarize the main idea used to achieve $r\le\rsoo{\phi}\le\rho_{max}$, where $r\in\mathbb{R}_{>0}$ is a robustness measure and $\rho_{max}\in\mathbb{R}_{>0}$ with $r<\rho_{max}$ is a robustness delimiter. It then follows that $(\boldsymbol{x},0)\models \phi$ since $r>0$; $r\le\rsoo{\phi}\le\rho_{max}$ is achieved by prescribing a temporal behavior to $\rsntt{\psi}$ through the design parameters $\gamma$ and $\rho_{max}$ as
\begin{align}
	&-\gamma(t)+\rho_{max}< \rsntt{\psi}< \rho_{max}. \label{eq:inequality}
\end{align}
Note the use of $\rsntt{\psi}$ and not $\rsoo{\phi}$ itself. The connection between the non-temporal $\rsntt{\psi}$ and the temporal $\rsoo{\phi}$ is made by the choice of the performance function $\gamma$. The proposed solution in \cite{lindemann2017prescribed} consists of two steps. First, the control law $\boldsymbol{u}$ is designed such that \eqref{eq:inequality} holds for all $t\in\mathbb{R}_{\ge 0}$. In a second step, $\gamma$ is designed such that satisfaction of \eqref{eq:inequality} for all $t\in\mathbb{R}_{\ge 0}$ implies $r\le\rsoo{\phi}\le\rho_{max}$. This second step results in selecting
\begin{align}
&\hspace{0.45cm}t_*\in\begin{cases} a &\text{ if } \phi=G_{[a,b]}\psi \\ 
[a,b] &\text{ if } \phi=F_{[a,b]}\psi,
\end{cases}\label{c_t_star}\\
&\hspace{-0.03cm}\rho_{max}\in\big(\max\big(0,\rho^{\psi}(\boldsymbol{x}(0))\big),\rso{\psi}-\chi\big]\label{rho_max}\\
&\hspace{0.557cm}r\in(0,\rho_{max})\label{r^i}\\
&\hspace{0.3975cm}\gamma_{0}\in\begin{cases}(\rho_{max}-\rho^{\psi}(\boldsymbol{x}(0)),\infty) &\text{if } t_*>0\\
(\rho_{max}-\rho^{\psi}(\boldsymbol{x}(0)),\rho_{max}-r] &\text{if } t_*=0 \end{cases}\label{eq:g1}\\
&\hspace{0.253cm}\gamma_{\infty}\in \Big(0,\min\big(\gamma_0,\rho_{max}-r\big)\Big]\label{eq:g2}\\
&\hspace{0.625cm}l\in \begin{cases}
\mathbb{R}_{\ge 0} & \text{if } -\gamma_0+\rho_{max}\ge r\\
\frac{\ln\big(\frac{r+\gamma_{\infty}-\rho_{max}}{-(\gamma_{0}-\gamma_{\infty})}\big)}{-t_*} & \text{if } -\gamma_0+\rho_{max}< r,\label{eq:g3}
\end{cases}
\end{align}
where $\chi>0$ is a small constant that satisfies $\chi<\rso{\psi}-\max\big(0,\rho^{\psi}(\boldsymbol{x}(0))\big)$. Furthermore, it needs to hold that $\rho^{\psi}(\boldsymbol{x}(0))>r$ if $t_*=0$. This paper will focus on the first step and derive an event-triggered feedback control law such that \eqref{eq:inequality} holds for all $t\in\mathbb{R}_{\ge 0}$. Define the one-dimensional error, the normalized error, and the transformed error as $e(\boldsymbol{x}):=\rsnt{\psi}-\rho_{max}$, $\xi(\boldsymbol{x},t):=\frac{e(\boldsymbol{x})}{\gamma(t)}$, and $\epsilon(\boldsymbol{x},t):=S\big(\xi(\boldsymbol{x},t)\big)=\ln\Big(-\frac{\xi(\boldsymbol{x},t)+1}{\xi(\boldsymbol{x},t)}\Big)$. As a notational rule, when talking about the solution $\boldsymbol{x}$ of \eqref{system_noise} at time $t$, we use $e(t)$, $\xi(t)$, and $\epsilon(t)$, while we use $e(\boldsymbol{x})$, $\xi(\boldsymbol{x},t)$, and $\epsilon(\boldsymbol{x},t)$ when we talk about $\boldsymbol{x}$ as a state; \eqref{eq:inequality} can now be written as $-\gamma(t) <e(t)< 0$, which resembles \eqref{eq:constrained_funnel} by setting $M:=0$ and can further be written as $-1< \xi(t)< 0$. Applying the function $S$ results in $-\infty<\epsilon(t)<\infty$. If now $\epsilon(t)$ is bounded for all $t\ge 0$, then \eqref{eq:inequality} holds for all for all $t\ge 0$. We remark that $\xi\big(\boldsymbol{x}(0),0\big)\in\Omega_\xi:=(-1,0)$ needs to hold initally, which is ensured by the choice of $\gamma_0$. 

\begin{problem}\label{problem}
Consider the system in \eqref{system_noise} and an STL formula $\phi$ of the form \eqref{eq:phi_class}. Design an event-triggered feedback control law $\hat{\boldsymbol{u}}$ s.t. $0< r\le\rsoo{\phi}\le\rho_{max}$, i.e., $(\boldsymbol{x},0)\models \phi$.
\end{problem}}

\section{Control Synthesis}
\label{sec:strategy}
We state the main result upfront in Theorem~\ref{theorem:main} which is proved in the subsequent section. 

 \begin{theorem}[Main Result] \label{theorem:main}
The dynamical system (\ref{system_noise}), satisfying Assumption \ref{assumption:1},
 along with the choice of PPC parameters as per equations \eqref{rho_max} -- \eqref{eq:g3} satisfies an STL formula $\phi$ of the form \eqref{eq:phi_class} if Assumptions \ref{assumption:2} -- \ref{assumption:3} are satisfied and if the event-triggered control law $\hat{\boldsymbol{u}}$ has the form
\begin{equation} \label{eq:ue}
\hat{\boldsymbol{u}}(t):= \boldsymbol{u}(\boldsymbol{x}(t_i),t_i) ~~~~~\forall t \in [t_i,t_{i+1}) 
\end{equation}
where the triggering instances $t_i$ are generated as: 
\begin{align}
t_0&:=0 \nonumber\\
\begin{split}
t_{i+1}&:=\inf\{t>t_i~|~ \|\boldsymbol{x}(t)-\bd x(t_i)\|_\infty> \delta_{i}\\
&\hspace{3.6cm}\text{or } t-t_i>\delta_{i}\}, \label{eq:triggering_rule}
\end{split}
 \hspace{0.5cm}i\ge 1
\end{align}
for some $\delta_{i} > 0$ (obtained later in the paper). The function $\boldsymbol{u}(\boldsymbol{x},t)$ in (\ref{eq:ue}) is chosen as
\begin{align}\label{equ:control}
\boldsymbol{u}(\boldsymbol{x},t):= -\epsilon(\boldsymbol{x},t){g(\boldsymbol{x})}^T\frac{\partial \rsnt{\psi}}{\partial\boldsymbol{x}}.
\end{align}
\end{theorem}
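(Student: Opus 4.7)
The plan is to reduce the claim to the PPC machinery already invoked in \cite{lindemann2017prescribed}: if the transformed error $\epsilon(t)$ stays bounded on $\mathbb{R}_{\ge 0}$, then $-\gamma(t)<e(t)<0$ for all $t\ge 0$, and the choices \eqref{c_t_star}--\eqref{eq:g3} then force $r\le\rsoo{\phi}\le\rho_{max}$, giving $(\boldsymbol{x},0)\models\phi$. The entire burden of the proof is therefore to show that, under the event-triggered law \eqref{eq:ue}--\eqref{eq:triggering_rule} with the control structure \eqref{equ:control} held on $[t_i,t_{i+1})$, the scalar $\epsilon$ remains bounded, and that \eqref{eq:triggering_rule} simultaneously admits a strictly positive $\delta_i$ so that Zeno behavior is excluded.

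The first step is to introduce the Lyapunov candidate $V(\boldsymbol{x},t):=\tfrac12\epsilon(\boldsymbol{x},t)^2$ and compute $\dot V$ along solutions of the closed loop. By the chain rule, $\dot V = \epsilon\bigl(\tfrac{\partial\epsilon}{\partial\boldsymbol{x}}\dot{\boldsymbol{x}}+\tfrac{\partial\epsilon}{\partial t}\bigr)$, and \eqref{system_noise} with $\boldsymbol{u}=\hat{\boldsymbol{u}}$ gives a decomposition $\dot V = \dot V_{\mathrm{nom}} + \dot V_{\mathrm{pert}}$. Writing $\boldsymbol{u}(\boldsymbol{x}(t),t)$ for the ``ideal'' continuous controller \eqref{equ:control} evaluated at the current state, the nominal part $\dot V_{\mathrm{nom}}$ contains the dissipative contribution obtained in \cite{lindemann2017prescribed}: using Assumption \ref{assumption:1} and the structure of $\boldsymbol{u}$, one gets a term of the form $-\kappa\,\epsilon^{2}\,\bigl\|g(\boldsymbol{x})^{T}\tfrac{\partial\rho^\psi}{\partial\boldsymbol{x}}\bigr\|^{2}$ with $\kappa>0$ derived from $S'(\xi)/\gamma$ on the funnel interior, dominating the $f$-, $\dot\gamma$- and $\boldsymbol{w}$-contributions whenever $|\epsilon|$ is sufficiently large. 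The perturbation part $\dot V_{\mathrm{pert}}$ is $\epsilon\,\tfrac{\partial\epsilon}{\partial\boldsymbol{x}}g(\boldsymbol{x})\bigl[\hat{\boldsymbol{u}}(t)-\boldsymbol{u}(\boldsymbol{x}(t),t)\bigr]$.

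The second step is to bound $\dot V_{\mathrm{pert}}$ in terms of $\delta_i$. On $[t_i,t_{i+1})$ the rule \eqref{eq:triggering_rule} enforces both $\|\boldsymbol{x}(t)-\boldsymbol{x}(t_i)\|_\infty\le\delta_i$ and $t-t_i\le\delta_i$; combined with the local Lipschitz continuity of $g$ (Assumption \ref{assumption:1}), the $C^{1}$ smoothness of $\rho^\psi$ inherited from the predicate functions together with Assumption \ref{assumption:2}, and the smoothness of $\gamma$ and $S$ on $\Omega_\xi$, this yields $\|\hat{\boldsymbol{u}}(t)-\boldsymbol{u}(\boldsymbol{x}(t),t)\|\le L_u\,\delta_i$ for a constant $L_u$ depending only on local bounds. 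The key observation that keeps the local bounds finite is Assumption \ref{assumption:4}(2): as long as $\epsilon$ lives in the funnel, $\rho^\psi(\boldsymbol{x})\ge\rho_{max}-\gamma_0$ confines $\boldsymbol{x}$ to a compact set, on which all Lipschitz and norm bounds are uniform. A parallel bound $|\,\tfrac{\partial\epsilon}{\partial\boldsymbol{x}}g(\boldsymbol{x})\,|\le c_1$ is obtained on this compact set.

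Combining the two estimates gives $\dot V\le -\kappa\,\epsilon^{2}\lambda_{\min}\bigl\|\tfrac{\partial\rho^\psi}{\partial\boldsymbol{x}}\bigr\|^{2}+c_1|\epsilon|L_u\delta_i+R$, where $R$ collects the (bounded) $f$, $\boldsymbol{w}$ and $\dot\gamma$ terms. Selecting $\delta_i>0$ small enough that the perturbation is a strict fraction of the nominal dissipation whenever $|\epsilon(t_i)|$ is above the level implied by the initial condition produces a forward-invariant set for $\epsilon$, hence boundedness. Crucially, the admissibility condition on $\delta_i$ is an explicit strictly positive function of $(\boldsymbol{x}(t_i),t_i)$ evaluated in the compact domain, so the infimum of $\delta_i$ over $i$ is bounded below by a positive constant; this simultaneously rules out Zeno behavior as required in Section \ref{sec:event}.

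The subtle point — and the main obstacle — is avoiding the circularity between ``trajectories remain in the compact set on which the Lipschitz constants are valid'' and ``those Lipschitz constants are used to keep trajectories in that set.'' I would handle it by a standard continuation argument: assume by contradiction that $T:=\inf\{t\ge 0:\xi(t)\notin\Omega_\xi\}<\infty$; on $[0,T)$ all local estimates apply and the dissipativity shows $|\epsilon|$ is bounded, giving $\xi(T^{-})\in\Omega_\xi$ by continuity, which contradicts maximality of $T$. This closes the argument and, together with the PPC chain of implications, yields $r\le\rsoo\phi\le\rho_{max}$, proving Theorem \ref{theorem:main}.
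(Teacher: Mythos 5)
Your route is essentially the paper's (a perturbed PPC Lyapunov analysis of $\epsilon$, a Lipschitz bound $\|\hat{\boldsymbol{u}}-\boldsymbol{u}\|\le L_u\delta_i$ enforced by the triggering rule, a continuation argument, and a Zeno check), but the central differential inequality as you write it does not hold, and the gap is exactly where you split off the constants. Every term of $\dot V$ --- the dissipative one, the perturbation $\epsilon\,\frac{\partial \epsilon}{\partial\boldsymbol{x}}g(\boldsymbol{x})[\hat{\boldsymbol{u}}-\boldsymbol{u}]$, and the $f$-, $\boldsymbol{w}$-, $\dot\gamma$-contributions --- carries the common factor $\alpha=-\frac{1}{\gamma\xi(1+\xi)}=S'(\xi)/\gamma$, which is positive but \emph{unbounded} on the funnel interior ($\alpha\to\infty$ as $\xi\to 0^-$ or $\xi\to -1^+$). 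Hence your claimed uniform bound $|\frac{\partial\epsilon}{\partial\boldsymbol{x}}g(\boldsymbol{x})|\le c_1$ ``on this compact set'' does not exist: compactness in $\boldsymbol{x}$ (Assumption \ref{assumption:4} plus concavity) bounds $\frac{\partial\rho^\psi}{\partial\boldsymbol{x}}$ and $g$, but not $S'(\xi)$; bounding that factor requires $\xi$ bounded away from $\{-1,0\}$, i.e.\ $\epsilon$ bounded, which is precisely what you are trying to prove --- the step is circular. The same objection applies to collecting the $f,\boldsymbol{w},\dot\gamma$ terms into a constant $R$: they are of size $\alpha|\epsilon|k_1$, not $O(1)$. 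The paper avoids this by never separating the factor: $\dot V\le \alpha|\epsilon|\big(k_1-k_2|\epsilon|+k_3\|\boldsymbol{u}-\hat{\boldsymbol{u}}\|_\infty\big)$ with $k_1,k_2,k_3$ independent of $\alpha$, so the sign of the bracket alone gives $|\epsilon(t)|\le\max\{|\epsilon(0)|,(k_1+k_3\delta_{\boldsymbol{u}})/k_2\}$ for \emph{any} $\delta_{\boldsymbol{u}}>0$. In particular, your requirement that $\delta_i$ be ``small enough that the perturbation is a strict fraction of the nominal dissipation'' is both built on the non-existent constants and unnecessary: since $S$ maps the funnel onto all of $\mathbb{R}$, any finite bound on $\epsilon$ keeps the trajectory strictly inside the funnel, which is exactly why the paper can leave $\delta_{\boldsymbol{u}}$ a free design parameter (Theorem \ref{theorem:2} and the remark following it).

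A second, smaller gap is the Zeno claim. A positive lower bound on $\delta_i$ does not by itself exclude Zeno behavior, because the state condition $\|\boldsymbol{x}(t)-\boldsymbol{x}(t_i)\|_\infty>\delta_i$ could in principle fire after arbitrarily short times; one also needs that $\boldsymbol{x}$ cannot traverse a distance $\delta_i$ arbitrarily fast. The paper's Lemma \ref{lemma:zeno} supplies this with the integral form of the solution and the Gr\"onwall--Bellman inequality (equivalently, a bound on $\|f+g\hat{\boldsymbol{u}}+\boldsymbol{w}\|$ on $B_{\delta_i}(\boldsymbol{x}(t_i))$), giving $t_{i+1}-t_i\ge\delta_i/L_\zeta$. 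Your compactness setup makes such a velocity bound available, but you only assert the conclusion. With the Lyapunov bookkeeping repaired (keep $\alpha|\epsilon|$ factored out) and this estimate added, your argument coincides with the paper's proof.
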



Theorem \ref{theorem:main} is now proved in  three steps. First, Lemma~\ref{theorem:3} summarizes how the continuous feedback control law $\boldsymbol{u}(\boldsymbol{x},t)$ from \cite{lindemann2017prescribed} results in $0< r\le\rsoo{\phi}\le\rho_{max}$. We recall the proof from \cite{lindemann2017prescribed} that is  needed in the third step. Second, we exclude Zeno behavior of the proposed event-triggered control strategy in Lemma \ref{lemma:zeno}. Third, it is  shown in Theorem~\ref{theorem:2} how $\boldsymbol{u}(\boldsymbol{x},t)$ is replaced with the event-triggered control law $\hat{\boldsymbol{u}}(t)$ that still guarantees $0< r\le\rsoo{\phi}\le\rho_{max}$.

\begin{lemma}[Theorem 1 in \cite{lindemann2017prescribed}]\label{theorem:3}
The dynamical system (\ref{system_noise}), satisfying Assumption \ref{assumption:1}, along with the choice of PPC parameters as per equations \eqref{rho_max} -- \eqref{eq:g3} satisfies an STL formula $\phi$ of the form \eqref{eq:phi_class} if the continuous feedback control law \eqref{equ:control} is applied and if Assumptions \ref{assumption:2} -- \ref{assumption:3} are satisfied. It then holds that $0< r\le \rsoo{\phi}\le \rho_{max}$, i.e., $(\boldsymbol{x},0)\models \phi$, with all closed-loop signals being continuous and bounded.

\begin{proof}
First, define the stacked vector $\boldsymbol{y}:=\begin{bmatrix}
\boldsymbol{x}^T & \xi
\end{bmatrix}^T$ and the sets $\Omega_\xi:=(-1,0)$ and $\Omega_{\boldsymbol{x}}:=\{\boldsymbol{x}\in\mathbb{R}^n|-1<\xi(\boldsymbol{x},0):=\frac{\rsnt{\psi}-\rho_{max}}{\gamma_0}<0\}$ and note that $\xi(0)\in\Omega_\xi$ and $\boldsymbol{x}(0)\in\Omega_{\boldsymbol{x}}$ due to the choice of $\gamma_0$. As in \cite{lindemann2017prescribed}, it follows that the conditions in \cite[Theorem 54]{sontag2013mathematical} hold. Consequently, there exists a maximal solution $\boldsymbol{y}:\mathcal{J}\to\Omega_{\boldsymbol{y}}$ with $\mathcal{J}:=[0,\tau_{max})$ and $\tau_{max}>0$, i.e., $\xi(t)\in\Omega_\xi$ and $\boldsymbol{x}(t)\in\Omega_{\boldsymbol{x}}$ for all $t \in \mathcal{J}$. 

We next show that $\boldsymbol{y}$ is complete, i.e.,  $\tau_{max}=\infty$, by contradiction of \cite[Proposition C.3.6]{sontag2013mathematical}. Assume therefore $\tau_{max}<\infty$ and consider the Lyapunov function $V(\epsilon):=\frac{1}{2}\epsilon^2$ and define $\dot{V}:=\frac{\partial V}{\partial \epsilon}\frac{d \epsilon}{dt}$. Thus, it holds that 
\begin{align}\label{eq:p1}
\dot{V}=\epsilon\dot{\epsilon} =\epsilon\Big(-\frac{1}{\gamma\xi(1+\xi)}\big(\frac{\partial \rsnt{\psi}}{\partial \boldsymbol{x}}^T\dot{\boldsymbol{x}}-\xi\dot{\gamma}\big)\Big).
\end{align} 
Define $\alpha:=-\frac{1}{\gamma\xi(1+\xi)}$, which satisfies $\alpha(t)\in[\frac{4}{\gamma_0},\infty)\subset \mathbb{R}_{> 0}$ for all $t\in \mathcal{J}$. Inserting \eqref{system_noise} and \eqref{equ:control} into \eqref{eq:p1} results in 
\begin{align*}
\hspace{-0.05cm}\dot V= \alpha\epsilon\Big(\frac{\partial \rsnt{\psi}}{\partial \boldsymbol{x}}^T\big(f(\boldsymbol{x})-\epsilon g(\boldsymbol{x}){g(\boldsymbol{x})}^T\frac{\partial \rsnt{\psi}}{\partial \boldsymbol{x}}+ \boldsymbol{w}\big)-\xi\dot{\gamma}  \Big)
\end{align*}
which can now be upper bounded as
\begin{align*}
\dot V\le \alpha|\epsilon|\Big(k_1- |\epsilon| \lambda_{\min}\|\frac{\partial \rsnt{\psi}}{\partial \boldsymbol{x}}\|^2\Big) \le \alpha |\epsilon|(k_1-k_2|\epsilon|)  
\end{align*}
where the positive constants $k_1$ and $k_2$ can be obtained as follows. For the constant $k_1$ note that $\boldsymbol{w}$, $\xi$, and $\dot{\gamma}$ are bounded and that continuous functions on compact domains are bounded.  Note especially that $\frac{\partial \rsnt{\psi}}{\partial \boldsymbol{x}}$ is continuous on the compact set $\text{cl}(\Omega_{\boldsymbol{x}})$ where $\text{cl}$ denotes the set closure. For $k_2$, a positive lower bound for $\|\frac{\partial \rsnt{\psi}}{ \partial \bd x}\|$ can be derived. Since $\rsnt{\psi}$ is a smooth and concave function due to Assumption~\ref{assumption:2} and \ref{assumption:4}, we have $\|\frac{\partial \rsnt{\psi}}{ \partial \bd x}\|\ge \frac{\rho^{\psi}_{opt}-\rsnt{\psi}}{\|\bd x^* - \bd x\|}$ where $\rho^\psi(\bd x^*)=\rho^\psi_{opt}$. It holds that $\rho_{max}\le \rho^\psi_{opt}-\chi<\rho^\psi_{opt}$ due to \eqref{rho_max}, which leads to $\rsntt{\psi}<\rho_{max}\le \rho^\psi_{opt}-\chi<\rho^\psi_{opt}$ since \eqref{eq:inequality} holds for all $t\in \mathcal{J}$. Hence, there exists $\kappa_1$ with $ \chi\ge \kappa_1 >0$ such that $\kappa_1\le \rho^{\psi}_{opt}-\rsntt{\psi}$. Furthermore, $\|\bd x^*-\bd x(t)\|$ is upper bounded since  $\bd x^*$ is finite and since $\bd x(t)\in\Omega_{\boldsymbol{x}}$ ($\Omega_{\boldsymbol{x}}$ is bounded) for all $t\in\mathcal{J}$ so that there exists a $\kappa_2>0$ such that $\|\bd x^*-\bd x(t)\|\le\kappa_2$. Thus, $\|\frac{\partial \rsnt{\psi}}{ \partial \bd x}\|\ge \frac{\kappa_1}{\kappa_2}>0$ and we set $k_2=\lambda_{\text{min}}(\frac{\kappa_1}{\kappa_2})^2$.  It follows that $\dot{V}\le 0$ if $\frac{k_1}{k_2}\le|\epsilon|$ and it can be concluded that the transformed error $|\epsilon|$ will be upper bounded due to the level sets of $V(\epsilon)$ as $|\epsilon(t)|\le \max\left(|\epsilon(0)|,\frac{k_1}{k_2}\right)$, i.e., $\epsilon(t)$ is lower and upper bounded and hence evolves in a compact set.  By the same arguments as in \cite{lindemann2017prescribed} it follows that there exists compact sets $\Omega_{\xi}^\prime\subset \Omega_{\xi}$ and $\Omega_{\boldsymbol{x}}^\prime\subset \Omega_{\boldsymbol{x}}$ such that $\xi(t)\in \Omega_{\xi}^\prime$ and $\boldsymbol{x}(t)\in\Omega_{\boldsymbol{x}}^\prime$ for all $t\in\mathcal{J}$. According to \cite[Proposition C.3.6]{sontag2013mathematical} it follows by contradiction that $\tau=\infty$. By the choice of $\gamma$ and \cite[Theorem~2]{lindemann2017prescribed} it holds that  $0< r\le \rsoo{\phi}\le \rho_{max}$.
\end{proof}
\end{lemma}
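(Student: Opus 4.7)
The plan is to proceed by the prescribed performance control paradigm: reformulate the STL condition \eqref{eq:inequality} as a funnel constraint on the one-dimensional error $e(\bd x)=\rsnt{\psi}-\rho_{max}$, show that the transformed error $\epsilon(\bd x,t)$ stays bounded for all time, and then invoke the design of $\gamma$ (equations \eqref{eq:g1}--\eqref{eq:g3}) to translate funnel satisfaction into $0<r\le \rsoo{\phi}\le \rho_{max}$.

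First I would set up the extended state $\bd y:=[\bd x^T,\xi]^T$ evolving on the open set $\Omega_{\bd x}\times\Omega_\xi$ where $\Omega_\xi=(-1,0)$ and $\Omega_{\bd x}:=\{\bd x\mid \xi(\bd x,0)\in\Omega_\xi\}$. The choice of $\gamma_0$ in \eqref{eq:g1} places the initial condition strictly inside this set. Using Assumptions~\ref{assumption:1}, \ref{assumption:2}, \ref{assumption:4} and the smoothness of $S$ and $\gamma$, the right-hand side of the closed-loop $\bd y$-dynamics is locally Lipschitz in $\bd y$ and piecewise continuous in $t$, so a standard ODE existence result (e.g.\ \cite[Theorem~54]{sontag2013mathematical}) yields a unique maximal solution on some interval $[0,\tau_{max})$ staying in $\Omega_{\bd x}\times\Omega_\xi$.

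Second, I would take $V(\epsilon)=\tfrac{1}{2}\epsilon^2$ and compute $\dot V=\epsilon\dot\epsilon$ by the chain rule, expressing $\dot\epsilon$ in terms of $\dot{\bd x}$ and $\dot\gamma$. Substituting the control \eqref{equ:control} and the dynamics \eqref{system_noise} yields a term of the form
\begin{align*}
\dot V = \alpha\epsilon\Bigl(\tfrac{\partial \rsnt{\psi}}{\partial\bd x}^{T}\bigl(f(\bd x)+\bd w\bigr)-\xi\dot\gamma - \epsilon\,\tfrac{\partial \rsnt{\psi}}{\partial\bd x}^{T}g(\bd x)g(\bd x)^{T}\tfrac{\partial \rsnt{\psi}}{\partial\bd x}\Bigr),
\end{align*}
where $\alpha=-1/(\gamma\xi(1+\xi))>0$ is bounded below by $4/\gamma_0$ on $\Omega_\xi$. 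Assumption~\ref{assumption:1} then gives $-\epsilon\,\partial\rho^\psi/\partial\bd x^T g g^T \partial\rho^\psi/\partial\bd x \le -\epsilon\lambda_{\min}\|\partial\rho^\psi/\partial\bd x\|^2$, and the remaining cross terms are bounded on the compact closure of $\Omega_{\bd x}\times\Omega_\xi$ since $\bd w\in\mathcal W$ is bounded, $\dot\gamma$ is bounded, and $f,g,\partial\rho^\psi/\partial\bd x$ are continuous. This delivers $\dot V \le \alpha|\epsilon|(k_1-k_2|\epsilon|)$ for suitable positive constants, provided I can exhibit a strictly positive lower bound on $\|\partial\rho^\psi/\partial\bd x\|$.

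The main obstacle, and the step I would argue most carefully, is this lower bound on $\|\partial\rho^\psi/\partial\bd x\|$. Here I would exploit concavity of $\rho^\psi$ (Assumption~\ref{assumption:4}): for any maximizer $\bd x^*$ of $\rho^\psi$ the supporting hyperplane inequality gives $\|\partial\rho^\psi/\partial\bd x\|\ge (\rho^\psi_{opt}-\rsnt{\psi})/\|\bd x^*-\bd x\|$. The funnel constraint $\rsntt{\psi}<\rho_{max}$ (which holds on $\mathcal J$ by construction), combined with $\rho_{max}\le \rho^\psi_{opt}-\chi$ from \eqref{rho_max}, yields a uniform gap $\rho^\psi_{opt}-\rsntt{\psi}\ge \chi>0$; boundedness of $\Omega_{\bd x}$ and well-posedness in Assumption~\ref{assumption:4} bound the denominator. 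Hence $k_2>0$ and $\dot V\le 0$ whenever $|\epsilon|\ge k_1/k_2$, so $|\epsilon(t)|\le\max(|\epsilon(0)|,k_1/k_2)$ on $\mathcal J$.

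Finally, bounded $\epsilon$ pulls $\xi(t)$ into a compact subset $\Omega_\xi'\subset\Omega_\xi$ via $S$, and correspondingly $\bd x(t)$ into a compact $\Omega_{\bd x}'\subset\Omega_{\bd x}$; invoking \cite[Proposition~C.3.6]{sontag2013mathematical} rules out finite-time escape and gives $\tau_{max}=\infty$. Thus $-\gamma(t)<e(t)<0$, i.e.\ \eqref{eq:inequality} holds for all $t\ge 0$. The second step of the PPC construction, namely the parameter choices \eqref{c_t_star}--\eqref{eq:g3}, is then designed precisely so that this funnel containment at $t_*$ enforces $r\le \rsntt{\psi}\big|_{t=t_*}$ and thereby $r\le \rsoo{\phi}\le \rho_{max}$; this can be cited from \cite[Theorem~2]{lindemann2017prescribed}. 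All closed-loop signals are continuous and bounded by construction.
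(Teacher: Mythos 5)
Your proposal is correct and follows essentially the same route as the paper's proof: maximal-solution existence on $\Omega_{\boldsymbol{x}}\times\Omega_\xi$ via \cite[Theorem~54]{sontag2013mathematical}, the Lyapunov argument with $V(\epsilon)=\frac{1}{2}\epsilon^2$ and the concavity-based lower bound $\|\frac{\partial \rsnt{\psi}}{\partial\boldsymbol{x}}\|\ge\frac{\rho^\psi_{opt}-\rsnt{\psi}}{\|\boldsymbol{x}^*-\boldsymbol{x}\|}$ to get $\dot V\le\alpha|\epsilon|(k_1-k_2|\epsilon|)$, completeness by contradiction of \cite[Proposition~C.3.6]{sontag2013mathematical}, and the final appeal to the $\gamma$-design and \cite[Theorem~2]{lindemann2017prescribed}. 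The only cosmetic difference is that you use the gap $\chi$ directly where the paper introduces an intermediate constant $\kappa_1\le\chi$, which changes nothing.
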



As an intermediate step, we next show that the triggerings generated by the rule \eqref{eq:triggering_rule} do not exhibit Zeno behavior.
\begin{lemma}\label{lemma:zeno}
The event-triggered control law $\hat{\boldsymbol{u}}(t)$ in \eqref{eq:ue} in conjunction with the triggering rule \eqref{eq:triggering_rule} does not exhibit Zeno behavior, i.e., $t_{i+1}-t_i$ is lower bounded for all $i\in\mathbb{N}$.

\begin{proof}
Triggerings induced when $\inf\{t|t-t_i> \delta_{i}\}$ imply that $t_{i+1}-t_i\ge \delta_{i}$. Otherwise, i.e., $\|\boldsymbol{x}(t)-\bd x(t_i)\|_\infty> \delta_{i}$, we have $\bd x(t)=\bd x(t_i)+\int_{t_i}^t[f(\bd x(s))+g(\bd x(s))\hat{\bd u }(s)+\bd w(s)]ds$ which is equivalent to $\bd x(t)-\bd x(t_i)=\int_{t_i}^tK_1(s)ds+\int_{t_i}^t[f(\bd x(s))-f(\bd x(t_i))]ds+\int_{t_i}^t[g(\bd x(s))-g(\bd x(t_i))]\hat{\bd u }(s)ds$ where $K_1(s):=f(\bd x(t_i))+g(\bd x(t_i))\hat{\bd u}(s)+\bd w(s)$. Then
\begin{small}
\begin{align*}\|\bd x(t)-\bd x(t_i)\|_\infty &\le \int_{t_i}^tK(s)ds+\int_{t_i}^tL_0(s)\|\bd x(s)-\bd x(t_i)\|_\infty ds
\end{align*} 
\end{small}where $K(s):=\|K_1(s)\|_\infty$ and $L_0(s):=\|L_f+L_g\hat{\bd u}(s)\|_\infty$ with $L_f$ and $L_g$ being the Lipschitz constants of the functions $f(\bd x)$ and $g(\bd x)$ in the domain $B_{\delta_{i}}(\bd x(t_i))$. Thus using the Gr\"{o}nwall-Bellman inequality, it can be shown that
\begin{align*}
\|\bd x(t)-\bd x(t_i)\|_\infty \le \Big(\int_{t_i}^tK(s)ds\Big)e^{\int_{t_i}^tL_0(s)ds}.
\end{align*}
In order for $\bd x(t)$ to leave the domain $B_{\delta_{i}}(\bd x(t_i))$, which corresponds to the condition in \eqref{eq:triggering_rule}, it is necessary that $\zeta(t):=\Big(\int_{t_i}^tK(s)ds\Big)e^{\int_{t_i}^tL_0(s)ds} \ge \delta_{i}$. Clearly $\zeta(t_i)=0$ and $\zeta(t)$ is differentiable everywhere for all $t>t_i$ with finite $\dot{\zeta}(t)$ and hence $\zeta(t)$ is Lipschitz continuous. Let us denote its Lipschitz constant by $L_{\zeta}$. Therefore at the next triggering instance $t_{i+1}$ we have $L_{\zeta}(t_{i+1}-t_i)\ge \zeta(t_{i+1}) \ge \delta_{i}$ so that $t_{i+1}-t_i\ge \frac{\delta_{i}}{L_{\zeta}}$ and
hence Zeno behavior is excluded.
\end{proof}
\end{lemma}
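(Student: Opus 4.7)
The plan is to split the analysis into the two disjunctive conditions defining the triggering rule in \eqref{eq:triggering_rule}. If the infimum in \eqref{eq:triggering_rule} is attained by the time-based condition $t - t_i > \delta_i$, then $t_{i+1} - t_i \ge \delta_i > 0$ is immediate. So the entire argument reduces to the case where the inter-trigger interval is terminated by the state-based condition $\|\boldsymbol{x}(t) - \boldsymbol{x}(t_i)\|_\infty > \delta_i$, and the task is to show that the state cannot traverse a ball of radius $\delta_i$ in arbitrarily short time.

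The first step will be to express the state mismatch on $[t_i, t_{i+1})$ in integral form using \eqref{system_noise} together with the hold law $\hat{\boldsymbol{u}}(s) = \boldsymbol{u}(\boldsymbol{x}(t_i), t_i)$. I would then add and subtract $f(\boldsymbol{x}(t_i)) + g(\boldsymbol{x}(t_i))\hat{\boldsymbol{u}}(s)$ inside the integrand, so that the resulting decomposition splits into (i) a ``forcing'' term $K_1(s) := f(\boldsymbol{x}(t_i)) + g(\boldsymbol{x}(t_i))\hat{\boldsymbol{u}}(s) + \boldsymbol{w}(s)$, which is uniformly bounded on the interval by boundedness of $\mathcal{W}$ and the constancy of $\hat{\boldsymbol{u}}$, and (ii) two terms that can be bounded by local Lipschitz constants $L_f, L_g$ of $f,g$ on $B_{\delta_i}(\boldsymbol{x}(t_i))$ multiplied by $\|\boldsymbol{x}(s)-\boldsymbol{x}(t_i)\|_\infty$. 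Taking the $L_\infty$ norm of both sides and applying the Gr\"onwall--Bellman inequality then yields an upper bound of the form $\|\boldsymbol{x}(t) - \boldsymbol{x}(t_i)\|_\infty \le \zeta(t)$ where $\zeta$ is the product of an integral of $\|K_1\|_\infty$ and the exponential of an integral of $L_f + L_g \|\hat{\boldsymbol{u}}\|_\infty$.

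The second step is to exploit smoothness of $\zeta$. Since $\zeta(t_i) = 0$ and $\zeta$ is differentiable on $(t_i, t_{i+1})$ with a derivative that is the sum of a bounded integrand and a bounded integrand times a bounded exponential, $\zeta$ is Lipschitz continuous on $[t_i, t_{i+1})$ with some finite constant $L_\zeta$ (which depends on $\delta_i$, the essential supremum of $\|K_1\|_\infty$, and $L_f + L_g\|\hat{\boldsymbol{u}}(t_i)\|_\infty$, all of which are finite). A state-based trigger at $t_{i+1}$ forces $\zeta(t_{i+1}) \ge \delta_i$, and Lipschitz continuity then gives $L_\zeta (t_{i+1} - t_i) \ge \delta_i$, i.e., $t_{i+1} - t_i \ge \delta_i/L_\zeta > 0$. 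Combining the two cases yields a strictly positive lower bound on $t_{i+1} - t_i$ and hence excludes Zeno behavior.

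The main obstacle I anticipate is justifying that the local Lipschitz constants $L_f, L_g$ used under the integral are valid throughout $[t_i, t_{i+1})$. This is self-consistent because, by the very definition of $t_{i+1}$ through \eqref{eq:triggering_rule}, the state remains in $B_{\delta_i}(\boldsymbol{x}(t_i))$ on the open interval $(t_i, t_{i+1})$, which is exactly the domain on which we invoke local Lipschitzness from Assumption~\ref{assumption:1}. A secondary technicality is that $\boldsymbol{w}$ is only piecewise continuous, but boundedness of $\mathcal{W}$ is sufficient for the integral $\int_{t_i}^t \|K_1(s)\|_\infty\, ds$ to be Lipschitz in $t$, so no additional regularity of $\boldsymbol{w}$ is needed.
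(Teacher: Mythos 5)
Your proposal is correct and follows essentially the same route as the paper's proof: case split on the two triggering conditions, integral decomposition around $f(\boldsymbol{x}(t_i))+g(\boldsymbol{x}(t_i))\hat{\boldsymbol{u}}(s)+\boldsymbol{w}(s)$, Gr\"onwall--Bellman to get the bound $\zeta(t)$, and Lipschitz continuity of $\zeta$ with $\zeta(t_i)=0$ to conclude $t_{i+1}-t_i\ge \delta_i/L_\zeta$. Your explicit justification that the local Lipschitz constants are valid on $B_{\delta_i}(\boldsymbol{x}(t_i))$ throughout the inter-trigger interval, and that boundedness of $\mathcal{W}$ suffices despite $\boldsymbol{w}$ being only piecewise continuous, makes precise points the paper leaves implicit.
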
 

The focus of this work is to design an event-based control law $\hat{\bd u}(t)$ based on the continuous feedback control law $\boldsymbol{u}(\boldsymbol{x},t)$ in \eqref{equ:control}. We show that simply replacing  \eqref{equ:control} by its equivalent zero-order hold approximation will be sufficient.  

\begin{theorem}\label{theorem:2}
With the same assumptions as in Lemma~\ref{theorem:3}, the event-triggered control law $\hat{\boldsymbol{u}}(t)$ in \eqref{eq:ue} in conjunction with the triggering rule \eqref{eq:triggering_rule} guarantees $0 <r\le \rsoo{\phi}\le\rho_{max}$, i.e., $(\boldsymbol{x},0)\models \phi$, provided that $\|\boldsymbol{u}(\boldsymbol{x},t)-\hat{\boldsymbol{u}}(t)\|_\infty\le \delta_{\boldsymbol{u}}$ for all $t\in \mathbb{R}_{\ge 0}$, where $\delta_{\boldsymbol{u}}>0$ is a design parameter.
\end{theorem}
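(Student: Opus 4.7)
The plan is to mirror the Lyapunov argument of Lemma~\ref{theorem:3} while treating the control mismatch $\Delta(t):=\hat{\boldsymbol{u}}(t)-\boldsymbol{u}(\boldsymbol{x}(t),t)$ as a bounded perturbation. By hypothesis $\|\Delta(t)\|_\infty\le \delta_{\boldsymbol{u}}$ for all $t\ge 0$, so the closed-loop system driven by $\hat{\boldsymbol{u}}$ coincides with the continuous-time closed loop of Lemma~\ref{theorem:3} up to an additive bounded term entering through $g(\boldsymbol{x})$.

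First I would take the same Lyapunov candidate $V(\epsilon)=\tfrac{1}{2}\epsilon^2$ and the stacked vector $\boldsymbol{y}=[\boldsymbol{x}^T\ \xi]^T$ used in Lemma~\ref{theorem:3}, and substitute $\hat{\boldsymbol{u}}=\boldsymbol{u}(\boldsymbol{x},t)+\Delta$ into \eqref{system_noise}. This reproduces the bound
\begin{align*}
\dot V \le \alpha|\epsilon|\Big(k_1- k_2|\epsilon|\Big) + \alpha\,\epsilon\,\tfrac{\partial\rho^\psi}{\partial\boldsymbol{x}}^T g(\boldsymbol{x})\,\Delta(t),
\end{align*}
with the same constants $\alpha, k_1, k_2$ as before, plus a single cross term. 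I would then bound the cross term by $\alpha|\epsilon|\cdot\|g(\boldsymbol{x})^T\tfrac{\partial \rho^\psi}{\partial \boldsymbol{x}}\|\cdot c\,\delta_{\boldsymbol{u}}$ (with $c$ a constant converting the $L_\infty$ bound on $\Delta$ into an $L_2$ bound), invoking the fact, already used in Lemma~\ref{theorem:3}, that $g$ and $\tfrac{\partial \rho^\psi}{\partial \boldsymbol{x}}$ are continuous and therefore bounded on the compact set $\mathrm{cl}(\Omega_{\boldsymbol{x}})$. Absorbing this contribution into a new constant $k_1'=k_1+c'\delta_{\boldsymbol{u}}$ yields $\dot V\le \alpha|\epsilon|(k_1'-k_2|\epsilon|)$, which, by the same level-set argument as in Lemma~\ref{theorem:3}, implies $|\epsilon(t)|\le \max(|\epsilon(0)|, k_1'/k_2)$ for all $t$ in the maximal interval of existence.

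From bounded $\epsilon(t)$ one recovers $\xi(t)\in\Omega_\xi$ and $\boldsymbol{x}(t)\in\Omega_{\boldsymbol{x}}$, so via \cite[Proposition C.3.6]{sontag2013mathematical} the solution is complete, i.e., $\tau_{max}=\infty$. Lemma~\ref{lemma:zeno} then ensures that $\hat{\boldsymbol{u}}$ is well-defined on $[0,\infty)$ (no Zeno execution), so the inequality \eqref{eq:inequality} holds globally. The remainder of the proof is unchanged: the choice of $\gamma$ in \eqref{eq:g1}--\eqref{eq:g3} together with \cite[Theorem~2]{lindemann2017prescribed} promotes the bound on $\rho^\psi(\boldsymbol{x}(t))$ into $0<r\le \rho^\phi(\boldsymbol{x},0)\le \rho_{max}$, and hence $(\boldsymbol{x},0)\models\phi$.

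The main obstacle is the bootstrapping step: the cross term is bounded only once one knows $\boldsymbol{x}(t)\in \mathrm{cl}(\Omega_{\boldsymbol{x}})$, but that inclusion is itself a consequence of the Lyapunov level-set estimate. I would resolve it exactly as in Lemma~\ref{theorem:3} by contradiction, assuming $\tau_{max}<\infty$ so that the trajectory remains in the compact set $\mathrm{cl}(\Omega_{\boldsymbol{x}})$ on $[0,\tau_{max})$, deriving the modified bound there, and then contradicting \cite[Proposition C.3.6]{sontag2013mathematical}. A secondary concern is that $\hat{\boldsymbol{u}}$ is only piecewise constant while $\boldsymbol{u}(\boldsymbol{x},t)$ is continuous, so I would verify that $\Delta(t)$ is piecewise continuous between triggering instances (which follows from continuity of $\boldsymbol{x}(t)$ and Lemma~\ref{lemma:zeno}), making all integrals in the Lyapunov analysis well-defined and keeping $\epsilon(t)$ absolutely continuous.
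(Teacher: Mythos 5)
Your proposal is correct and follows essentially the same route as the paper: the same Lyapunov function $V(\epsilon)=\tfrac{1}{2}\epsilon^2$, the same add-and-subtract of $\boldsymbol{u}$ so that the mismatch enters as a cross term bounded by a constant (the paper's $k_3$) times $\delta_{\boldsymbol{u}}$, the same absorbed bound $|\epsilon(t)|\le\max\big(|\epsilon(0)|,\tfrac{k_1+k_3\delta_{\boldsymbol{u}}}{k_2}\big)$, and the same completeness-by-contradiction argument combined with Lemma~\ref{lemma:zeno} and the choice of $\gamma$ to conclude $0<r\le\rho^{\phi}(\boldsymbol{x},0)\le\rho_{max}$. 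Your bootstrapping and piecewise-continuity remarks only make explicit steps the paper leaves implicit.
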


\begin{proof}
We now need to show that $\epsilon(t)$ is bounded between each of the triggering instances $t_i$ and $t_{i+1}$. We can similarly to Lemma \ref{theorem:3} guarantee a maximal solution $\boldsymbol{y}:\mathcal{J}\to\Omega_{\boldsymbol{y}}$ with $\mathcal{J}:=[t_i,\tau_{max})$. Consider again the Lyapunov function $V(\epsilon):=\frac{1}{2}\epsilon^2$ so that (recall \eqref{eq:p1})
\begin{align*}
\dot{V}&=\alpha\epsilon\Big(\frac{\partial \rsnt{\psi}}{\partial \boldsymbol{x}}^T\big(f(\boldsymbol{x})+g(\boldsymbol{x})\hat{\boldsymbol{u}}+\boldsymbol{w}\big)-\xi\dot{\gamma}\Big)\\
&= \alpha\epsilon \Big(\frac{\partial \rsnt{\psi}}{\partial \boldsymbol{x}}^T\big(f(\boldsymbol{x})+g(\boldsymbol{x})(\hat{\boldsymbol{u}}+\boldsymbol{u}-\boldsymbol{u})+\boldsymbol{w}\big)-\xi\dot{\gamma}\Big)  \\
&\le \alpha|\epsilon| k_1- \alpha\epsilon^2 k_2+\alpha|\epsilon| \Big\|\frac{\partial \rsnt{\psi}}{\partial \boldsymbol{x}}^Tg(\boldsymbol{x})\Big\|_\infty\|\hat{\boldsymbol{u}}-\boldsymbol{u}\|_\infty,
\end{align*}
where $k_1$ and $k_2$ are from the proof of Lemma \ref{theorem:3}. We can write $\|\frac{\partial \rsnt{\psi}}{\partial \boldsymbol{x}}^Tg(\boldsymbol{x})\|_\infty\le k_3$ for some positive constant $k_3$ and obtain finally $\dot V\le   \alpha|\epsilon|( k_1 -\epsilon  k_2 +k_3 \|\boldsymbol{u}-\hat{\boldsymbol{u}}\|_\infty).$ Thus, $\|\boldsymbol{u}-\hat{\boldsymbol{u}}\|_\infty\le \delta_{\boldsymbol{u}}$ implies that $\|\epsilon(t)\|\le \max\{\epsilon(0), \frac{k_1+k_3\delta_{\boldsymbol{u}}}{k_2}\}$. Together with Lemma \ref{lemma:zeno}, it can be concluded that $\tau_{max}=t_{i+1}$ and hence  $0< r\le\rsoo{\phi}\le\rho_{max}$.
\end{proof}

According to Theorem \ref{theorem:2} there is no bound imposed on the value of $\delta_{\boldsymbol{u}}$, i.e., $\delta_{\boldsymbol{u}}$ is a design parameter. Larger values of $\delta_{\boldsymbol{u}}$ imply larger inter-event times, whereas smaller values of $\delta_{\boldsymbol{u}}$ imply more frequent triggering of the events. However, larger values of $\delta_{\boldsymbol{u}}$ also imply that $\epsilon(\bd x,t)$ can attain higher values, which implies (according to \eqref{equ:control}) larger magnitude for the control signal. Theorem \ref{theorem:2} may not be very useful in practice since $\bd u(\bd x,t)$ still needs to be computed continuously at the sensors in order to ensure $\|\boldsymbol{u}(\boldsymbol{x},t)-\hat{\boldsymbol{u}}(t)\|_\infty\le \delta_{\boldsymbol{u}}$. The triggering rule in \eqref{eq:triggering_rule}, derived in the sequel, avoids this and is chosen in a way to ensure $\|\boldsymbol{u}(\boldsymbol{x},t)-\hat{\boldsymbol{u}}(t)\|_\infty\le \delta_{\boldsymbol{u}}$. {It holds that $\boldsymbol{u}(\boldsymbol{x},t)$ is a Lipschitz continuous function on $B_{\delta_{i,\boldsymbol{x}}}(\boldsymbol{x}(t_i))\times[t_i,t_i+\delta_{i,t}]$ for some $\delta_{i,\boldsymbol{x}},\delta_{i,t}>0$. To see this, note that $\Omega:=\{(\boldsymbol{x},t)\in\mathbb{R}^n\times\mathbb{R}_{\ge 0}|-1<\xi(\boldsymbol{x},t)<0\}$ is an open set and that $\epsilon(\boldsymbol{x},t)$, $g(\boldsymbol{x})$, and $\frac{\partial \rsnt{\psi}}{\partial\boldsymbol{x}}$ are locally Lipschitz continuous on $\Omega$ due to being continuously differentiable on $\Omega$. If now $(\boldsymbol{x}(t_i),t_i)\in\Omega$, then there exists a set $B_{\delta_i,\boldsymbol{x}}(\boldsymbol{x}(t_i))\times[t_i,t_i+\delta_{i,t}]\subset\Omega$  in which $\epsilon(\boldsymbol{x},t)$, $g(\boldsymbol{x})$, $\frac{\partial \rsnt{\psi}}{\partial\boldsymbol{x}}$, and hence $\boldsymbol{u}(\boldsymbol{x},t)$ are Lipschitz continuous. Define $\boldsymbol{z}(t):=\begin{bmatrix} {\boldsymbol{x}(t)}^T & t \end{bmatrix}^T$ and denote by $L_{\boldsymbol{z}}(\delta_{i,\boldsymbol{x}},\delta_{i,t})$ the Lipschitz constant of $\bd u(\boldsymbol{x},t)$ on $B_{\delta_{i,\boldsymbol{x}}}(\boldsymbol{x}(t_i))\times[t_i,t_i+\delta_{i,t}]$, i.e., $\|\bd u(\boldsymbol{x}(t_1),t_1)-\bd u(\boldsymbol{x}(t_2),t_2)\|_\infty\le L_{\boldsymbol{z}}(\delta_{i,\boldsymbol{x}},\delta_{i,t})\|\boldsymbol{z}(t_1)-\boldsymbol{z}(t_2)\|_\infty$ for $\boldsymbol{x}(t_1),\boldsymbol{x}(t_2)\in B_{\delta_{i,\boldsymbol{x}}}(\boldsymbol{x}(t_i))$ and $t_1,t_2\in[t_i,t_i+\delta_{i,t}]$. Note that $L_{\boldsymbol{z}}(\delta_{i,\boldsymbol{x}},\delta_{i,t})$ is a non-decreasing function of $\delta_{i,\boldsymbol{x}}$ and $\delta_{i,t}$.} Let us consider 
\begin{align}\label{eq:delta_x}
\delta_{i}:=\min\Big(\frac{\delta_u}{L_{\boldsymbol{z}}(\delta_{i,\boldsymbol{x}},\delta_{i,t})},\delta_{i,\boldsymbol{x}},\delta_{i,t}\Big),
\end{align} 
which implies $B_{\delta_i}(\bd x(t_i))\subseteq B_{\delta_{i,\boldsymbol{x}}}(\bd x(t_i))$ and $[t_i,t_i+\delta_i]\subseteq[t_i,t_i+\delta_{i,t}]$. Thus, for all $t\in[t_i,t_i+\delta_i]$ and $\bd x(t) \in B_{\delta_{i}}(\bd x(t_i))$, i.e., when $\|\bd z(t)-\bd z(t_i)\|_\infty\le \delta_i$, it holds that
$$\|\bd{u}(\bd{x}(t),t)-\hat {\bd u}(t)\|_\infty \le L_{\boldsymbol{z}}(\delta_{i,\boldsymbol{x}},\delta_{i,t}) \|\bd z(t)-\bd z(t_i)\|_\infty\le \delta_{\bd u}.$$
Due to the use of the $\|\cdot\|_\infty$-norm, $\bd x(t) \in B_{\delta_{i}}(\bd x(t_i))$ and $t\in[t_i,t_i+\delta_i]$ is a sufficient condition to ensure $\|\boldsymbol{u}(\boldsymbol{x},t)-\hat{\boldsymbol{u}}(t)\|_\infty\le \delta_{\boldsymbol{u}}$. Thus, the $(i+1)$-th triggering instance is induced when
\begin{align*} 
t_{i+1}=\inf \{t>t_i|\|\bd x(t)-\bd x(t_i)\|_\infty>{\delta_{i}} \text{ or } t-t_i>\delta_i\},
\end{align*}
which is the triggering rule given in \eqref{eq:triggering_rule}.
\begin{remark}
With the choice of $\delta_{i}$ in \eqref{eq:delta_x},  $\|\bd x(t)-\bd x(t_i)\|_\infty\le \delta_{i}$ and $t-t_i\le \delta_i$ is a sufficient condition for $\|\boldsymbol{u}(\boldsymbol{x},t)-\hat{\boldsymbol{u}}(t)\|\le \delta_{\boldsymbol{u}}$. Hence 
{Theorem \ref{theorem:main} is finally obtained by the choice of $\delta_{i}$ in conjunction with Theorem \ref{theorem:2}.}
\end{remark}


\section{Simulations}
\label{sec:simulations}

\begin{figure*}[tbh]
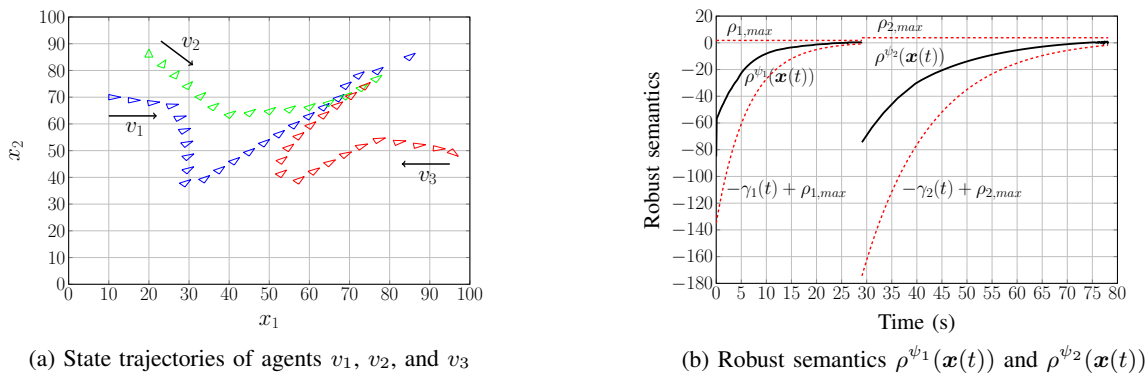

\centering
\begin{subfigure}{0.49\textwidth}
\input{figures/1}\caption{State trajectories of agents $v_1$, $v_2$, and $v_3$}\label{fig:1}
\end{subfigure}
\begin{subfigure}{0.49\textwidth}
\input{figures/2}\caption{Robust semantics $\rho^{\psi_1}(\boldsymbol{x}(t))$ and $\rho^{\psi_2}(\boldsymbol{x}(t))$}\label{fig:2}
\end{subfigure}
\caption{Agent Trajectories and Robust Semantics}
\vspace{-10 pt}
\end{figure*}

\begin{figure}[h]
\centering
\input{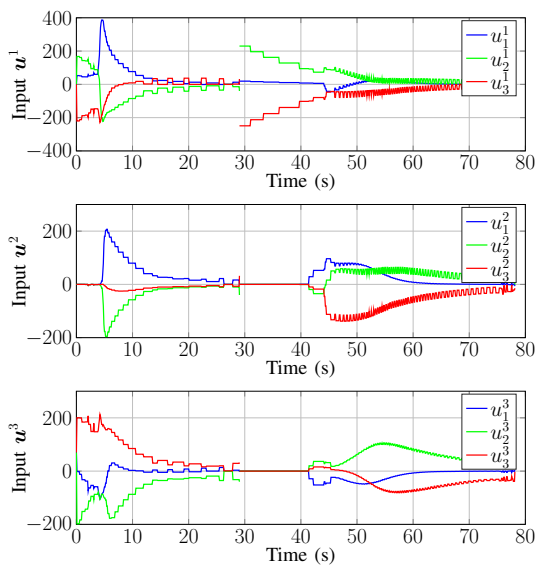}
\caption{Input trajectories of agents $v_1$, $v_2$, and $v_3$}
\label{fig:3}
\vspace{-10 pt}
\end{figure}

We consider a centralized multi-agent system consisting of three agents $v_1$, $v_2$, and $v_3$. Each agent is a three-wheeled omni-directional mobile robot as in \cite{liu2008omni} with three states: two states $x_1$ and $x_2$ describing the robot's position and one state $x_{3}$ describing its orientation with respect to the $x_{1}$-axis. In figures, the orientation will be indicated by triangles. The states of each agent $v_i$ with $i\in\{1,2,3\}$ are hence described by $\boldsymbol{x}^i:=\begin{bmatrix} x_1^i & x_2^i & x_3^i \end{bmatrix}$, while the control input is $\boldsymbol{u}^i:=\begin{bmatrix}u_1^i & u_2^i & u_3^i\end{bmatrix}$. The dynamics of each robot are 
\begin{align*}
\dot{\boldsymbol{x}}^i=g_i(\boldsymbol{x}^i)\boldsymbol{u}^i=\begin{bmatrix}
\cos(x_{3}^i) & -\sin(x_{3}^i) & 0\\
\sin(x_{3}^i) & \cos(x_{3}^i) & 0\\
0 & 0 & 1
\end{bmatrix}\Big(B_i^T\Big)^{-1}R_i\boldsymbol{u}^i,
\vspace*{-3pt}
\end{align*}
where $B_i:=\begin{bmatrix}
0 & \cos(\pi/6) & -\cos(\pi/6)\\
-1 & \sin(\pi/6) & \sin(\pi/6)\\
L_i & L_i & L_i
\end{bmatrix}$ describes geometrical constraints with $L_i:=0.2$ and $R_i=0.02$ as the radius of the robot body and the wheels, respectively. By definining $\boldsymbol{x}:=\begin{bmatrix}
\boldsymbol{x}^1 & \boldsymbol{x}^2 & \boldsymbol{x}^3
\end{bmatrix}^T$ and $\boldsymbol{u}:=\begin{bmatrix}
\boldsymbol{u}^1 & \boldsymbol{u}^2 & \boldsymbol{u}^3
\end{bmatrix}^T$, the overall dynamics are then given by
\begin{align*}
\dot{\boldsymbol{x}}:=\text{diag}\big(g_1(\boldsymbol{x}^1),g_2(\boldsymbol{x}^2),g_3(\boldsymbol{x}^3)\big)\boldsymbol{u}=g(\boldsymbol{x})\boldsymbol{u}.
\end{align*}  

The STL task imposed on the multi-agent system is a formula $\theta:=F_{[0,50]}\psi_1 \wedge F_{[50,100]} \psi_2$ where $\psi_1$ orders agent $v_1$, $v_2$, and $v_3$ to the positions $\begin{bmatrix}
20 & 30
\end{bmatrix}^T$, $\begin{bmatrix}
40 & 60
\end{bmatrix}^T$, and $\begin{bmatrix}
60 & 30
\end{bmatrix}^T$, respectively, while eventually all agents have an orientation of $45$ degrees. Furthermore, agent $v_1$ and $v_3$ should stay close; $\psi_2$ orders agent $v_1$ to $\begin{bmatrix}
90 & 90
\end{bmatrix}^T$, while agent $v_1$ and $v_2$ and agent $v_2$ and $v_3$ should stay in proximity and while all agents remain the orientation of $45$ degrees. In formulas, this can be expressed as $\psi_1:=(\|\begin{bmatrix}
x_1^1 & x_2^1
\end{bmatrix}^T-\begin{bmatrix}
20 & 30
\end{bmatrix}^T\|<10)\wedge (\|\begin{bmatrix}
x_1^2 & x_2^2
\end{bmatrix}^T-\begin{bmatrix}
40 & 60
\end{bmatrix}^T\|<10)\wedge (\|\begin{bmatrix}
x_1^3 & x_2^3
\end{bmatrix}^T-\begin{bmatrix}
60 & 30
\end{bmatrix}^T\|<10)\wedge (\|\begin{bmatrix}
x_1^1 & x_2^1
\end{bmatrix}^T-\begin{bmatrix}
x_1^3 & x_2^3
\end{bmatrix}^T\|<30)\wedge(|x_3^1-45|<5)\wedge(|x_3^2-45|<5)\wedge(|x_3^3-45|<5)$ and $\psi^2:=(\|\begin{bmatrix}
x_1^1 & x_2^1
\end{bmatrix}^T-\begin{bmatrix}
90 & 90
\end{bmatrix}^T\|<10)\wedge(\|\begin{bmatrix}
x_1^1 & x_2^1
\end{bmatrix}^T-\begin{bmatrix}
x_1^2 & x_2^2
\end{bmatrix}^T\|<10)\wedge(\|\begin{bmatrix}
x_1^2 & x_2^2
\end{bmatrix}^T-\begin{bmatrix}
x_1^3 & x_2^3
\end{bmatrix}^T\|<10)\wedge(|x_3^1-45|<5)\wedge(|x_3^2-45|<5)\wedge(|x_3^3-45|<5)$. With $\eta:=1$, it holds that $\rho^{\psi_1}_{opt}=1.86$ and $\rho^{\psi_2}_{opt}=3.89$ so that $\rho_{max}^{\psi_1}:=1.8$, $\rho_{max}^{\psi_1}:=3.8$, $r_1:=0.5$, $r_2:=1$, and $\delta_{\boldsymbol{u}}:=50$ have been selected.

All simulations have been performed in real-time on a two-core 1,8 GHz CPU with 4 GB of RAM.
 The agent trajectories in the $x_1$-$x_2$ plane are displayed in Fig. \ref{fig:1}, while the funnels \eqref{eq:inequality}, including $\rho^{\psi_1}(\boldsymbol{x}(t))$ and $\rho^{\psi_2}(\boldsymbol{x}(t))$, are shown in Fig. \ref{fig:2}. The formula $\theta$ is satisfied, i.e, $(\boldsymbol{x},0)\models \theta$, and it holds that $\min(r_1,r_2)=0.5<\rho^{\theta}(\boldsymbol{x},0)<1.8=\min(\rho_{1,max},\rho_{2,max})$. The control inputs are shown in Fig.~\ref{fig:3} and it is visible that during the satisfaction of the first subformula $F_{[0,50]}\psi_1$ there are fewer control updates than for the second subformula $F_{[50,100]} \psi_2$. The coordination of agent $v_1$, $v_2$, and $v_3$ leads to an increase in control updates in the latter case. {The sampling frequency has been set to $100$ Hz; and total simulation duration was $77.25$ seconds. Out of $7725$ samples, our event-triggered control law only required $185$ triggerings. Thereby, a reduction in communication and computation events by $97.6$~\% has been achieved.}

\section{Conclusion}
\label{sec:conclusion}
In this paper, we have derived an event-triggered feedback control law for dynamical systems under signal temporal logic tasks. The event-triggering mechanism is based on a norm bound on the difference between the  continuous feedback law and the event-triggered version of it. Event-triggered control leads to a significant decrease in communication between the sensors and the actuators, which is highly desirable under costly communication. 


\bibliographystyle{IEEEtran}
\bibliography{literature}

\begin{thebibliography}{10}
\providecommand{\url}[1]{#1}
\csname url@samestyle\endcsname
\providecommand{\newblock}{\relax}
\providecommand{\bibinfo}[2]{#2}
\providecommand{\BIBentrySTDinterwordspacing}{\spaceskip=0pt\relax}
\providecommand{\BIBentryALTinterwordstretchfactor}{4}
\providecommand{\BIBentryALTinterwordspacing}{\spaceskip=\fontdimen2\font plus
\BIBentryALTinterwordstretchfactor\fontdimen3\font minus
  \fontdimen4\font\relax}
\providecommand{\BIBforeignlanguage}[2]{{%
\expandafter\ifx\csname l@#1\endcsname\relax
\typeout{** WARNING: IEEEtran.bst: No hyphenation pattern has been}%
\typeout{** loaded for the language `#1'. Using the pattern for}%
\typeout{** the default language instead.}%
\else
\language=\csname l@#1\endcsname
\fi
#2}}
\providecommand{\BIBdecl}{\relax}
\BIBdecl

\bibitem{karaman2011sampling}
S.~Karaman and E.~Frazzoli, ``Sampling-based algorithms for optimal motion
  planning,'' \emph{The international journal of robotics research}, vol.~30,
  no.~7, pp. 846--894, 2011.

\bibitem{baier}
C.~Baier and J.-P. Katoen, \emph{Principles of Model Checking}, 1st~ed.\hskip
  1em plus 0.5em minus 0.4em\relax Cambridge, MA: The MIT Press, 2008.

\bibitem{belta2007symbolic}
C.~Belta, A.~Bicchi, M.~Egerstedt, E.~Frazzoli, E.~Klavins, and G.~J. Pappas,
  ``Symbolic planning and control of robot motion [grand challenges of
  robotics],'' \emph{IEEE Robotics \& Automation Magazine}, vol.~14, no.~1, pp.
  61--70, 2007.

\bibitem{fainekos2009temporal}
G.~E. Fainekos, A.~Girard, H.~Kress-Gazit, and G.~J. Pappas, ``Temporal logic
  motion planning for dynamic robots,'' \emph{Automatica}, vol.~45, no.~2, pp.
  343--352, 2009.

\bibitem{maler1}
O.~Maler and D.~Nickovic, ``Monitoring temporal properties of continuous
  signals,'' in \emph{Proceedings of the International Conference on
  FORMATS-FTRTFT}, Grenoble, France, 2004, pp. 152--166.

\bibitem{donze2}
A.~Donz{\'e} and O.~Maler, ``Robust satisfaction of temporal logic over
  real-valued signals,'' in \emph{Proceedings of the 8th International
  Conference on Formal Modeling and Analysis of Timed Systems}, Klosterneuburg,
  Austria, September 2010, pp. 92--106.

\bibitem{lindemann2017prescribed}
L.~Lindemann, C.~K. Verginis, and D.~V. Dimarogonas, ``Prescribed performance
  control for signal temporal logic specifications,'' in \emph{Proceedings of
  the 2017 IEEE 56th Conference on Decision and Control (CDC)}, Melbourne,
  Australia, December 2017, pp. 2997--3002.

\bibitem{bechlioulis2014low}
C.~P. Bechlioulis and G.~A. Rovithakis, ``A low-complexity global
  approximation-free control scheme with prescribed performance for unknown
  pure feedback systems,'' \emph{Automatica}, vol.~50, no.~4, pp. 1217--1226,
  2014.

\bibitem{tabuada2007event}
P.~Tabuada, ``Event-triggered real-time scheduling of stabilizing control
  tasks,'' \emph{IEEE Transactions on Automatic Control}, vol.~52, no.~9, pp.
  1680--1685, 2007.

\bibitem{dimarogonas2012distributed}
D.~V. Dimarogonas, E.~Frazzoli, and K.~H. Johansson, ``Distributed
  event-triggered control for multi-agent systems,'' \emph{IEEE Transactions on
  Automatic Control}, vol.~57, no.~5, pp. 1291--1297, 2012.

\bibitem{postoyan2015framework}
R.~Postoyan, P.~Tabuada, D.~Ne{\v{s}}i{\'c}, and A.~Anta, ``A framework for the
  event-triggered stabilization of nonlinear systems,'' \emph{IEEE Transactions
  on Automatic Control}, vol.~60, no.~4, pp. 982--996, 2015.

\bibitem{maity2018event}
D.~Maity and J.~S. Baras, ``Event-triggered controller synthesis for dynamical
  systems with temporal logic constraints,'' in \emph{Proceedings of the
  American Control Conference (ACC)}, 2018, pp. 1184--1189.

\bibitem{fainekos2009robustness}
G.~E. Fainekos and G.~J. Pappas, ``Robustness of temporal logic specifications
  for continuous-time signals,'' \emph{Theoretical Computer Science}, vol. 410,
  no.~42, pp. 4262--4291, 2009.

\bibitem{sontag2013mathematical}
E.~D. Sontag, \emph{Mathematical control theory: deterministic finite
  dimensional systems}, 2nd~ed.\hskip 1em plus 0.5em minus 0.4em\relax Berlin,
  Germany: Springer Science \& Business Media, 2013.

\bibitem{liu2008omni}
Y.~Liu, J.~J. Zhu, R.~L. Williams, and J.~Wu, ``Omni-directional mobile robot
  controller based on trajectory linearization,'' \emph{Robotics and Autonomous
  Systems}, vol.~56, no.~5, pp. 461--479, 2008.

\end{thebibliography}

\addtolength{\textheight}{-12cm}   

\end{document}